\newtheorem{definition}{Definition}
\newtheorem{theorem}{Theorem}
\newtheorem{lemma}[theorem]{Lemma}
\newtheorem{assumption}{Assumption}
\begin{document}
\title{YAC: BFT Consensus Algorithm for Blockchain}

%NOTE: authors are ommited for peer review
\author{Fedor Muratov, Andrei Lebedev, Nikolai Iushkevich,\\ Bulat Nasrulin, Makoto Takemiya\\
\footnotesize Soramitsu\\
\footnotesize % \affaddr{Shenzhen 518055, China. }\\
\{fyodor, andrei, nikolai, bulat, takemiya\}@soramitsu.co.jp \\}%
% note need leading \protect in front of \\ to get a newline within \thanks as
% \\ is fragile and will error, could use \hfil\break instead.
%E-mail: see http://www.michaelshell.org/contact.html
%\IEEEcompsocthanksitem J. Doe and J. Doe are with Anonymous University.}% <-this % stops a space
%}

% note the % following the last \IEEEmembership and also \thanks -
% these prevent an unwanted space from occurring between the last author name
% and the end of the author line. i.e., if you had this:
%
% \author{....lastname \thanks{...} \thanks{...} }
%                     ^------------^------------^----Do not want these spaces!
%

% \markboth{2018 Crypto Valley Conference on Blockchain Technology (CVCBT)}%
% {2018 Crypto Valley Conference on Blockchain Technology (CVCBT)}

\IEEEcompsoctitleabstractindextext{%
\begin{abstract}
Consensus in decentralized systems that asynchronously receive events and which are subject to Byzantine faults is a common problem with many real-life applications. Advances in decentralized systems, such as distributed ledger (i.e., blockchain) technology, has only increased the importance of finding performant and secure solutions to consensus of state machine replication in decentralized systems.

YAC is a practical decentralized consensus algorithm, that solves the problems of inefficient message passing and strong leaders that occur in classical Byzantine fault tolerant consensus algorithms. The algorithm is open source and currently is used to provide Byzantine fault tolerant consensus for the Hyperledger Iroha blockchain project. We provide proofs of safety and liveness, as well as empirical results showing that our algorithm can scale to dozens of validating peers.

%We provide an example of consensus round with 4 peers in the network, where one peer is malicious, and prove that safety and liveness constraints are guaranteed.
\end{abstract}

% Note that keywords are not normally used for peerreview papers.
\begin{IEEEkeywords}
Blockchain, distributed consensus, distributed ledger technology.
\end{IEEEkeywords}}

\maketitle

\IEEEdisplaynotcompsoctitleabstractindextext
\IEEEpeerreviewmaketitle

\section{Introduction}
\label{sec:introduction}

Decentralized systems built around blockchain technology have received much attention in recent years. Satoshi Nakamoto \cite{nakamoto2008bitcoin} proposed the blockchain structure as a way for timestamping sets of transactions without a central authority, which he used to create a purely decentralized, digital cash. The advantage of Nakamoto's proposed method for decentralized consensus (hereafter, \textit{Nakamoto consensus}) is that network participants are able to operate in an untrusted or semi-trusted environment with a strong adversary assumption, that is, in the presence of Byzantine faults among network participants.

However, the proposed method of using Proof-of-Work \cite{DworkN92} is too resource intensive to be practical for many applications, and the reliance on using Proof-of-Work to determine the ``correct'' blockchain out of several competing blockchain states, leads to a lack of transaction finality that is caused by the probabilistic nature of Proof-of-Work. Therefore, researchers have given renewed attention to Byzantine fault tolerant algorithms that function in a deterministic way under the assumption of less than $f$ faulty participants in the network consensus, due to guarantees of transaction finality and computational efficiency.

While Byzantine fault tolerant consensus algorithms have a long pedigree spanning several decades, a majority of previous research has not focused on aspects peculiar to blockchain technology, such as creating practical systems that are censorship and politically resistant.

In this paper we present a novel, practical Byzantine fault tolerant consensus algorithm that has a modular architecture and simple implementation, called Yet Another Consensus. Our implementation is open source and included in the Hyperledger Iroha blockchain platform\footnote{\url{https://github.com/hyperledger/iroha}}. Empirical results show that our solution achieves both low latency for transactions as well a reasonably high transaction throughput.

\section{Related work}

Decentralized systems are distributed systems that are designed to have no central authority. Instead, the rules of the system create a decentralized platform where data can be agreed upon, through relying on some mechanism for distributed consensus. Blockchain platforms use decentralized consensus to maintain consistency across a distributed state machine, which can be used to perform completely decentralized payments \cite{nakamoto2008bitcoin} or even Turing complete computations \cite{wood2014ethereum}, realizing the creation of decentralized applications. Consensus is the \textit{sine qua non} of distributed systems \cite{CorreiaVNV11}, and has received large amounts of attention recently from the industry \cite{CachinV17}. It is the role of consensus in blockchain systems to guarantee that all non-faulty peers in a blockchain network perform the same state machine updates in the same order. More than 30 years of research in distributed systems has failed to create efficient, internet-scale consensus \cite{PassS17}.

Consensus \cite{lamport2005generalized} should guarantee \textit{liveness} of the system, \textit{security}, and \textit{convergence} (consistency) of data stored in the ledger. Liveness means that the system should never stop and should be able to recover from errors. Security means that non-faulty peers should not accept false data. Consistency means that all non-faulty peers should maintain or converge to the same global ordering and state. Various consensus algorithms have been proposed for different situations. Consensus algorithms are often discussed in terms of some weak synchrony assumption, and, as shown by Fisher et al. \cite{Fischer:1985ji}, no distributed consensus algorithm can give a deterministic solution in a fully asynchronous network.

Satoshi Nakamoto's proposal \cite{nakamoto2008bitcoin} for decentralized consensus uses the concept of an adaptable difficulty to create a completely decentralized computational problem that is very difficult to solve, yet trivial to verify the answer to. The difficulty of solving the Nakamoto Proofs-of-Work prevent Sybil attacks on decentralized systems, as the required computations are non-trivial. Other forms of Nakamoto consensus, such as Proof-of-Stake \cite{BentovPS16a} or Proof-of-Importance, use the concepts of globally known, random numbers (such as block hashes and public keys), along with finite values of non-zero cost, such as account balances in Proof-of-Stake or account importance scores in Proof-of-Importance, to determine what account has the privilege of creating a block at a given time.

%TODO: About BFT, failures in general.

The practical Byzantine fault tolerance (PBFT) \cite{Castro:1999jd} consensus algorithm, in the normal case, runs a three-phase protocol: pre-prepare, prepare, and commit. A client sends a request to one of the peers, who in turn broadcasts pre-prepare messages to the other peers. In the prepare stage, a prepare message is multicasted to all other nodes. When a replica receives $2f$ prepare messages, it matches with the pre-prepare message and multicasts a commit message. Once the replica receives $2f+1$ commit messages, that match the pre-prepare message, it changes the state to \textit{committed} and executes the message operation. Once the message is executed, a reply is sent to the client. The main purpose of a Byzantine fault tolerant consensus algorithm is to allow the system to be able to survive and continue work despite some of the machines exhibiting arbitrary faults. Although, PBFT is a consensus algorithm with proven security and liveness properties, it makes some assumptions that are not practical. The network overhead during consensus round does not allow scale the consensus protocol, limiting the throughput of the whole system. It was shown \cite{Miller2016} that PBFT can be attacked by an adversary using a simple scheduling mechanism, halting the consensus either completely, or forcing to wait long timeout when leader is partitioned and unsynchronized.

Zyzzyva \cite{KotlaADCW07} extended PBFT, avoiding the expensive three-phase commit protocol, utilizing fast track and  actively involving the client into the consensus process. In the best case scenario, the client sends the request to the leader, who is broadcasts it to all other replicas, and the client waits for replies from all the peers. If the client receives 3$f$+1 replies, it commits the transaction(s). In the case when the client receives between 2$f$+1 to 3$f$ messages, a regular consensus algorithm is used. The client is a major player, who is responsible for checking the integrity of replicas.

Hyperledger Fabric~\cite{fabric2018, fabric2017} implements a pluggable consensus on the order of transactions in the ordering service. At the time of this writing, Hyperledger Fabric includes the following consensus possibilities: a centralized server, crash fault tolerance, and Byzantine fault tolerance (using BFT-SMart \cite{Bessani:2014ei}). The main difference between BFT-SMart and PBFT is improved reliability and multicore computation for the evaluation of signatures. The ordering cluster uses $3f+1$ nodes, each of which collects current envelopes and runs BFT-Smart consensus. A transaction in Hyperledger Fabric consists of the following steps: (1) a client creates a transaction and sends it to all endorsing peers to execute the transaction against the current state, (2) endorsing peers transmit the result of the execution to the client, (3) the client collects all endorsements, assembles them into the transaction and broadcasts it to the ordering service, (4) the ordering service collects the transactions and relays the transaction proposal to the channel peers, (5) the channel peers validate the transactions and apply valid transactions to the current local state. When a transaction is validated by a peer, the result is sent to the client. A transaction is considered final only when the client receives enough transaction statuses from validating peers, which complicates the process and client.

BChain \cite{Duan:2014ce} is a chain-based (as opposed to broadcast-based) Byzantine fault tolerant consensus algorithm, which uses information about the topological order of validators in the network to determine the flow of messages sent to reach consensus. BChain prioritizes efficiency of message passing over latency of a transaction, because chain-based consensus algorithms validate transactions serially and not in parallel.

Tendermint~\cite{kwon2014tendermint} requires locking a certain amount of coins (stake), which cannot be spent during the duration of mining. Unlike other algorithms where validators are all equal, Tendermint validators are not required to be equal, with the voting power being proportional to the amount of coins locked. The consensus algorithm itself follows the PBFT three-phase commit message pattern, but uses a gossip protocol for blocks and transaction propagation. A validator commits a block only if it has the block and votes for it from two-thirds of the total voting power.

Algorand~\cite{Gilad:2017bp} is similar to Tendermint in that it has stake-weighted voting by valdiators, but it is unique in that it combines the stake values with a cryptographic sortition function to choose the voting committee. This makes the consensus algorithm more resistant to censorship than Tendermint. However, designing stake-weighted consensus algorithms is tricky~\cite{Kiayias:2017jb}, as this approach can open up new attack vectors.

Miller et al. ~\cite{Miller2016} argued that current weak synchronous consensus algorithms rely heavily on timing assumptions. They showed that an adversary can attack PBFT, causing it to either stop making any progress for consensus, or significantly slowing the protocol.

The HoneyBadger consensus algorithm is an asynchronous protocol, promising practical throughput.
The authors make the assumption that peers communicate with each other through authenticated trusted channels, and consensus is done only within a static community, without the possibility of extending it. Although, our consensus algorithm, YAC, relies on the timing assumption, we carefully evade problems of the PBFT consensus algorithm, by using a dynamic peer list.

% TODO: Final words, why creating YAC makes sense and why it is new. %

\section{Motivating Example}
\label{sec:example}

This section describes one round of YAC consensus execution with 4 peers (Alice, Bob, Clara, Deana), and follows the steps to reach consensus for a block of transactions.

To explain the YAC algorithm, we will follow an example of consensus along each each step of the pipeline. The pipeline starts with each client sending its own transactions to the ordering service (OS). It is the responsbility of the OS to collect all transactions, order them, and create a block proposal $P_1$. The ordering service then shares proposal $P_1$ with each peer in the network.

\subsection*{Alice's validation process}

Alice is a `validator" who validates a proposal $P_1$.  Alice tries to apply each transaction in the proposal to her local state. A transaction is considered valid if it is not ill-formed according to validation rules, and its application does not violate rules about global state (e.g., no account can have negative balances). The validation process is shown in Figure~\ref{fig:validation}. Alice creates a \textit{block} from all valid transactions and calculates hash of block $H_1$.

\begin{figure}[h!]
	\centering
    \includegraphics[width=0.9\columnwidth]{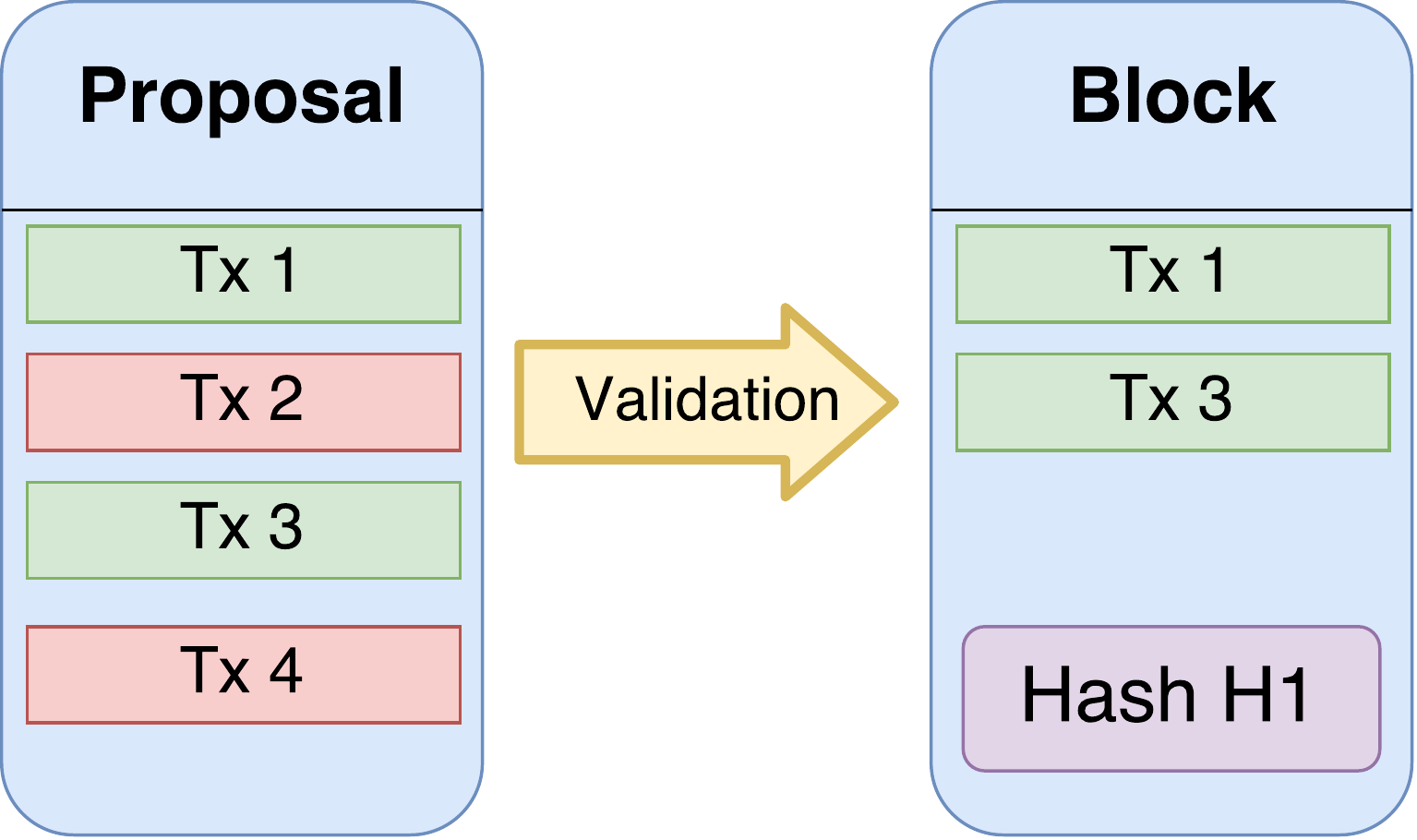}
    \caption{\textit{Validation process.} Alice removes invalid transactions.}
    \label{fig:validation}
\end{figure}

\subsection*{Order function and Alice's vote}

Alice now knows the proposal hash $H_1$ and the initial order of peers---$[A, B, C, D]$. Using hash $H_1$ as an input for the order function, Alice calculates the permutation for the current round, with the result $[C, D, A, B]$. The first peer in this round's permutation list is Clara. Alice creates a vote and propagates the vote to Clara. After sharing the vote, Alice switches her local state to \textit{waiting} for a commit message, up until some time delay (Figure~\ref{fig:state}).

\begin{figure}[h!]
	\centering
    \includegraphics[width=0.8\columnwidth]{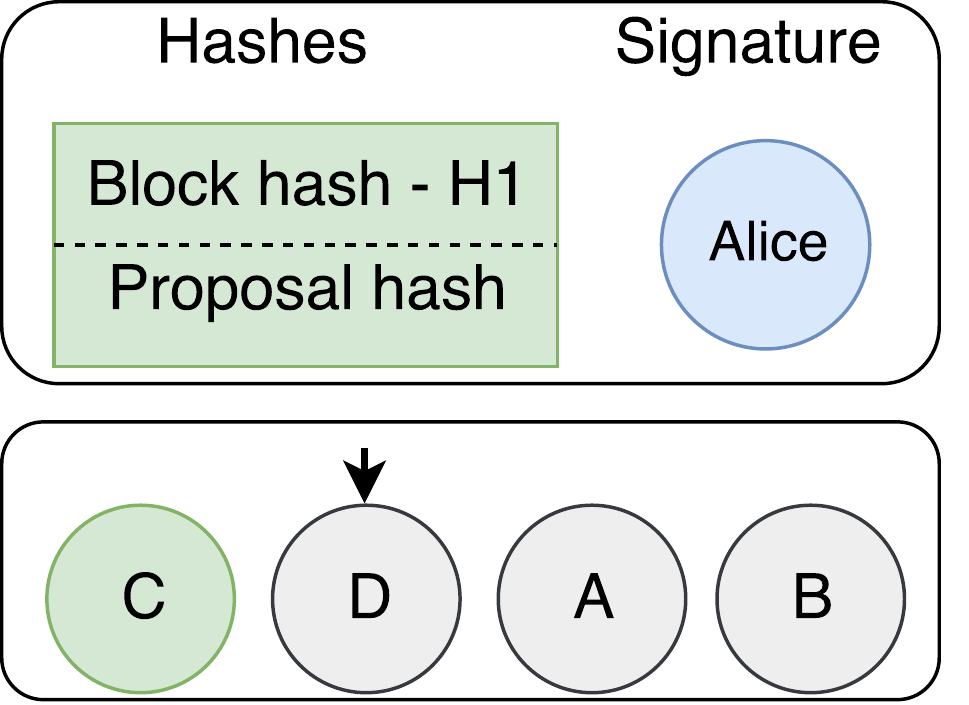}
    \caption{\textit{Alice's state.} Alice sends the generated vote to Clara, the first peer. The vote will be sent to Deana if no commit message is received after some time delay.}
    \label{fig:state}
\end{figure}

\subsection*{State of Clara}

Clara received Alice's vote. Assume that Clara calculated the same hash $H_1$ from the validation process of proposal $P_1$. Clara propagates the vote to herself, since Clara calculates the same order using the ordering function. Clara now has two votes: Alice's and Clara's. But there are no votes yet from Bob and Deana.

\subsection*{Deana shares the vote}

\begin{figure}[h!]
	\centering
    \includegraphics[width=0.9\columnwidth]{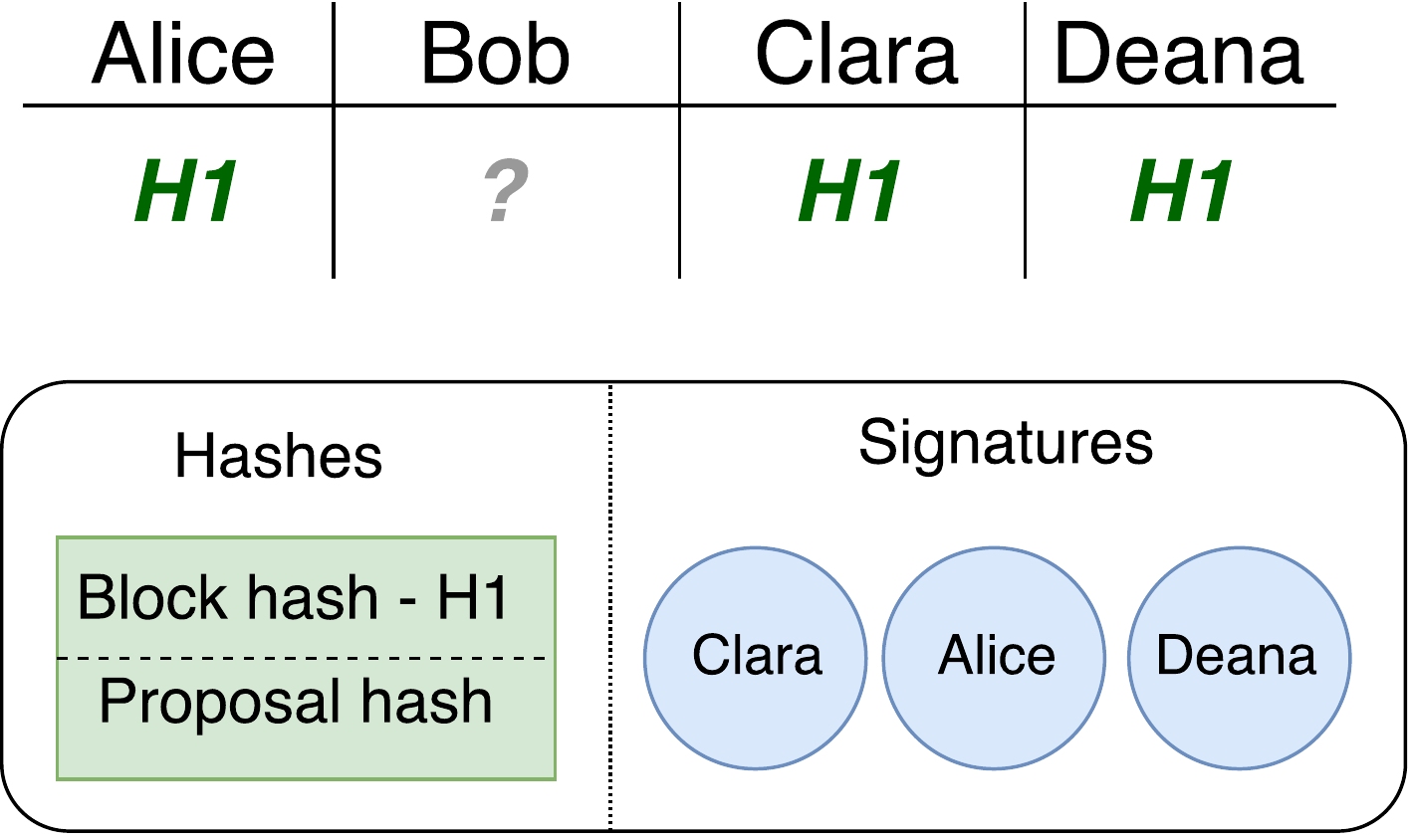}
    \caption{\textit{Clara's state.} Clara makes a commit message after receiving \textit{3} votes.}
    \label{fig:clara}
\end{figure}

Deana also computed hash $H_1$ and sent her vote to Clara. Now Clara has the required supermajority ($> 2/3$) of votes (Figure~\ref{fig:clara}). Clara broadcasts the commit message, with all the received votes from the peers (Figure~\ref{fig:broadcast}). Each peer, except Bob, receives a commit message and adds signatures to the block with the hash $H_1$ and updates the local state.

\begin{figure}[h!]
	\centering
    \includegraphics[width=0.9\columnwidth]{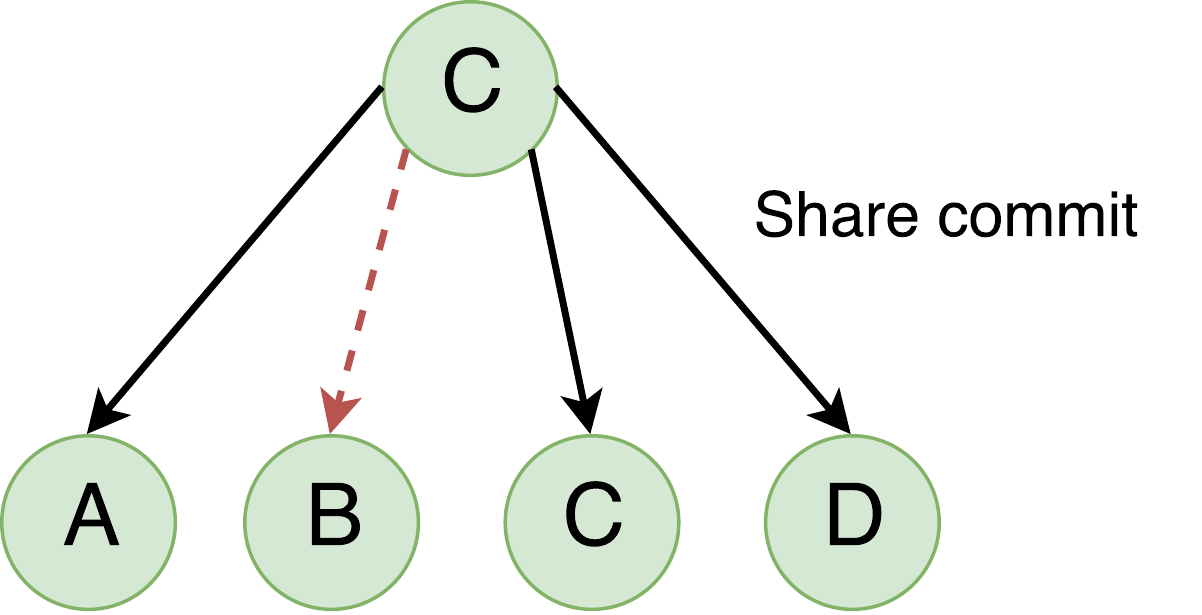}
    \caption{\textit{Commit broadcast.} Clara shares the commit message with all peers.
    Bob has connectivity issues and cannot receive the commit.}
    \label{fig:broadcast}
\end{figure}

\subsection*{What about Bob?}
\label{subsec:whatAboutBob?}

\begin{figure}[h!]
    \centering
    \includegraphics[width=0.8\columnwidth]{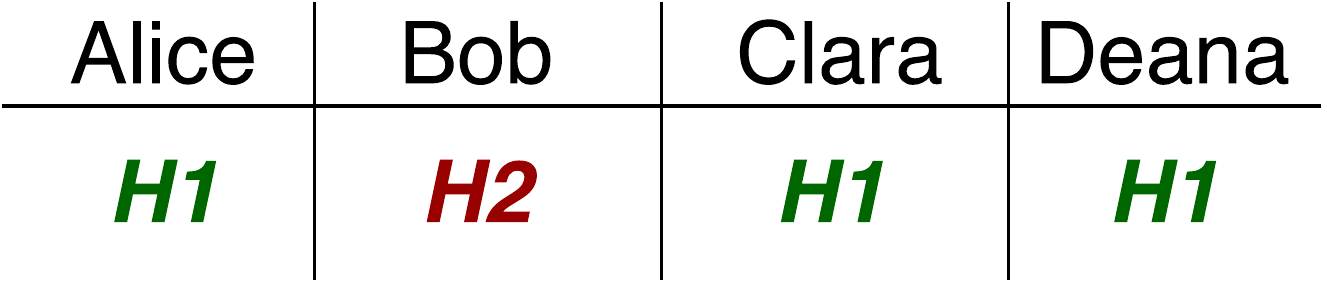}
    \caption{Alice's state after commit.}
    \label{fig:alice}
\end{figure}

Let's assume that Bob has an unstable connection to the network and he missed the previous and current rounds, including Clara's commit message. Bob did not propagate a vote with hash $H_1$ to Clara because he had an inconsistent state, or is malicious. Bob computes hash $H_2$, and gets a different peer order $[A, B, D, C]$ based on the hash $H_2$. Bob propagates his vote to Alice, and after Alice receives the vote, and since Alice has already received the commit message from Clara, she forwards Clara's commit message to Bob \textit{directly}. Bob verifies the commit message from Alice and applies it. He will now have the same state as others after the consensus round.

\section{YAC}

\subsection{Preliminaries}

Below we give an overview of the typical participants in YAC\@.

\subsubsection*{Client}
Each client is associated with one user who has a public key registered in the blockchain system. In general, the role of the client is to generate transactions and send them to the ordering service. Client is also deploying smart contracts, making queries to peers. 

\subsubsection*{Peer}

A peer is the network participant responsible for validating and reaching agreement on transactions in proposals and storing the negotiated transactions into blocks. Peers maintain the complete transaction history in order to validate proposals.

\subsubsection*{Ordering Service}

The ordering service is the functional module responsible for taking sets of transactions and creating block proposals. A block proposal contains a list of transactions for further agreement at peers. A proposal is a set of transactions to be be validated and voted on by peers. Ordering service is an abstract entity having multiple implementations. We discuss more on it in section~\ref{subsec:ordering}.

%TODO: \textbf{Add more details what are the requirements to the consensus in Iroha}

%TODO: \textbf{Add more details about features of YAC. }

%YAC works with a known set of validators.

%YAC algorithm was inspired by classical PBFT algorithm, significantly improving it.

%Classical leader-based algorithms have one obvious vulnerability: the leader is exposed to denial of service (DoS) attacks and can censor transactions or votes.

%The algorithm does not require the leader election and every peer can collect collaboration messages.

%Adversary cannot attack the network without the knowledge what the block hash is.

%We further add statistics such as response time, probability of maliciousness to improve the round time convergence in practice.

We provide the below definitions and general assumptions to use throughout this paper.

\begin{definition}
\label{definition:round}
    Round $r$ is a period of processing the $r$-th proposal by a peer. The round starts at invoking processing a proposal and finishes on committing a block. A round consists of 2 logical phases: validation of transactions in the proposal and agreement on the block derived from the proposal.
\end{definition}

\begin{definition}
\label{definition:honestPeer}
    Honest peer is a peer that tries to sync with the network, creates valid votes, commits and, never creates forks for any round.
\end{definition}

\begin{assumption}
\label{assumption:os}
    We assume that Ordering Service is Byzantine fault tolerant by itself and agreement on the order of transactions is honest. Also, all valid transactions from clients will eventually appear in some proposal. The Ordering Service guarantees that a proposal will be delivered for all peers in the network.
\end{assumption}

\begin{assumption}
\label{assumption:asycEnvironment}
    We use the assumption of an asynchronous environment~\cite{Attiya1990}; each message that is successfully sent will eventually received by the recipient.
\end{assumption}

\begin{assumption}
\label{assumption:crypto}
    Hashes and digital signatures used have cryptographic resistance and messages can't be altered by adversaries.
\end{assumption}

\begin{assumption}
\label{assumption:agreementOnSignatures}
    YAC guarantees BFT replication for the data---transactions in the block. However, different honest peers may apply different subsets of signatures for a block. This is a valid situation because each honest peer has proof of supermajority voting for a committed block.
\end{assumption}

\subsection{Overview}

For simplicity of the general pipeline explanation, we make several assumptions. First, we assume the client is known by the peers, and the client has a list of known peers to interact with. Second, client has its own keys stored securely on a device. Third, the client has permissions to execute a certain subset of commands/smart contracts on the blockchain (e.g., to transact). The general network pipeline of one round in a blockchain system with YAC can be described with the following steps:
\begin{enumerate}
\item[1.] A client forms a transaction with commands and signs it with their private key.
\item[2.] The client sends the transaction to a peer. The peer receives the transaction, performs stateless validation (i.e., verifies that it is not malformed), and relays it to the OS.
\item[3.] The OS generates a proposal and sends it to the peers. The proposal contains an ordered list of transactions that will be potentially added to the blockchain in this round.
\item[4.] The proposal is sent to the voting peers. Peers enter the collaboration phase, during which they exchange votes across the network and decide on a block. More details on the collaboration phase are provided in Section~\ref{sec:collab}.
\item[5.] The peer commits the block to their local block store.
\end{enumerate}

\subsection{Ordering phase}\label{subsec:ordering}
%
%TODO:  10/03/2018 move  separate to overview section (os pipeline) and preliminaries (assumptions, possible implementations)%
The OS collects transactions to include into a new proposal. A proposal is generated after the OS has collected some number of transactions, or after a time limit. The OS then broadcasts the proposal to all peers after the proposal is created.

The ordering service is an abstract concept that can have many possible implementations. The OS can be considered as one entity that is defined upon network creation. The ordering phase can be expressed using three steps: 1) clients share transactions with the ordering service, 2) the OS makes a proposal by combining received transactions, 3) the OS then shares the proposal with all validating peers.

%For our work, we use a censorship-resistant algorithm with guaranteed liveness, where one peer is elected a leader in each consensus round and is responsible for creating a block proposal.

\begin{figure}[H]
    \centering
    \includegraphics[width=0.9\columnwidth]{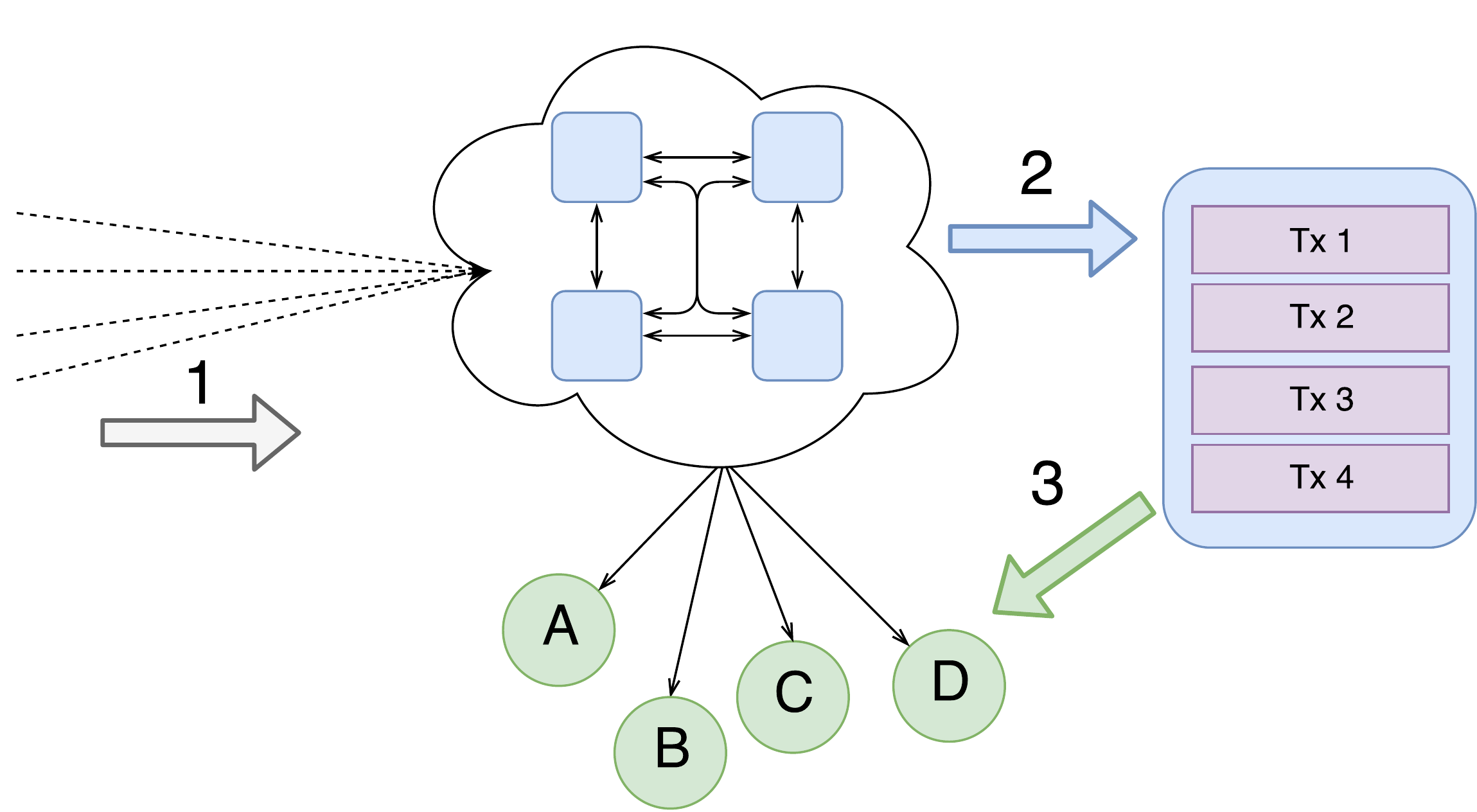}
    \caption{Ordering phase: 1. Clients send transactions to a peer with the ordering service; 2. the ordering service makes a proposal; 3. the ordering service shares the proposal with all peers.}
\end{figure}

\section{Collaboration phase}
\label{sec:collab}

%\textit{
%Section describes collaboration phase.
%Sunny-day case.
%Need to highlight separately that vote step not affected on peer that shares commit message.
%Failure cases: broken leader, reject case, strategies when shutdown greater then f.}

A peer calculates a \textit{verified proposal} after it receives a proposal from the ordering service. A verified proposal is a subset of transactions from the proposal, defined to be valid by the peer. The block, that is generated by a peer, consists of a proposal hash, transactions from the verified proposal, and additional metadata required for cryptographic validation of the chain.

The \textit{proposal hash} defines a unique proposal for each round of collaboration. The \textit{block hash} represents an intention of the peer to store a subset of transactions in the ledger. These hashes are required because different peers can calculate different blocks from the same proposal. A \textit{vote} message contains the pair of hashes, and a signature, which is used to authenticate a peer when a message is received from the network.

\subsection{Permutation function}
\label{subsec:permutationFunction}

When a peer votes for a block hash, it generates an order over the validating peers for the current round. The order is a permutation of peers that is required to propagate the vote in the network.

The order is generated by a function that takes the block hash and an initial list of peers as parameters. The order function is required to be pure and return uniformly distributed lists.

%TODO:  10/03/2018 extend the section %
%TODO:  10/03/2018 add pseudocode %

\subsection{Vote step}

%TODO:  10/03/2018 add pseudocode %

Vote messages for a proposal are sent to each peer in the specified order. A delay exists between each propagation. The process repeats until a valid commit or reject message is received from the network. One iteration of this process is called a \textit{vote step}. The process of propagation starts from the first peer after propagation to last peer in the order of validating peers.
%Commit and reject messages are defined by supermajority of peers.
%Supermajority is defined as more than two thirds of peers in current round.
%It is said that supermajority is achieved on a hash when a peer has collected a supermajority of votes for a hash.

\begin{figure}[H]
    \centering
    \includegraphics[width=0.7\columnwidth]{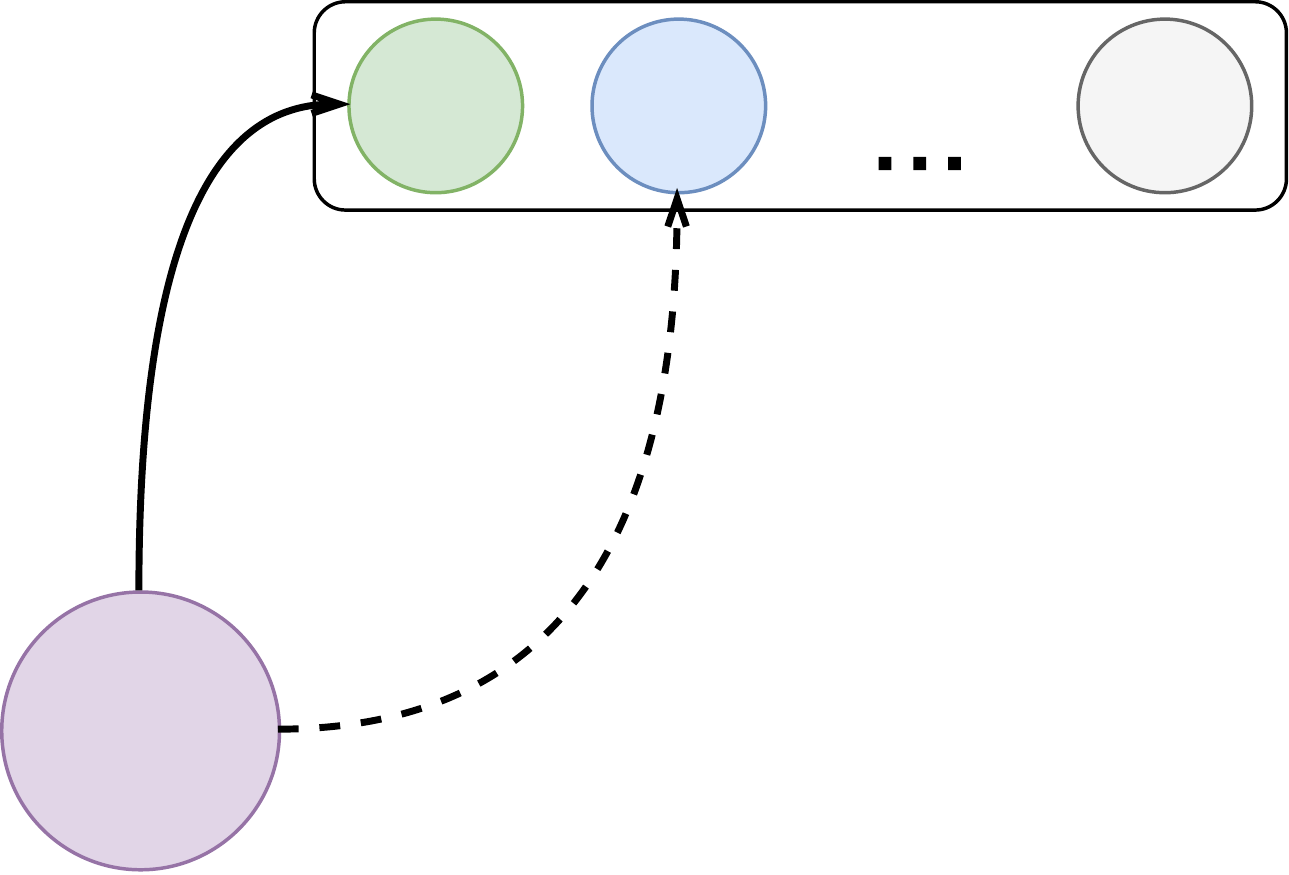}
    \caption{\textit{Vote step.} The vote message is sent to the first peer in the list (solid line), and then is sent to the next peer (dashed line), because of timer expiration in the vote step.}
\end{figure}

\subsection{Commit}
\label{subsec:commit}

\textit{Supermajority} is defined to be a number greater than two thirds of all peers in the network. A \textit{commit} message is a set of votes for one block hash, signed by a supermajority of peers. When some peer has collected a supermajority of votes for one hash, it broadcasts the commit message.

\subsection{Reject}

A \textit{reject} message is a set of votes that proves that peers will not collect a supermajority of votes for any block hash. This can be defined as if the sum of missing votes and votes for the most frequent block hash is less than the supermajority of the nodes. A peer broadcasts the reject message in the same way as the commit message.

\begin{figure}[h]
    \centering
    \includegraphics[width=0.7\columnwidth]{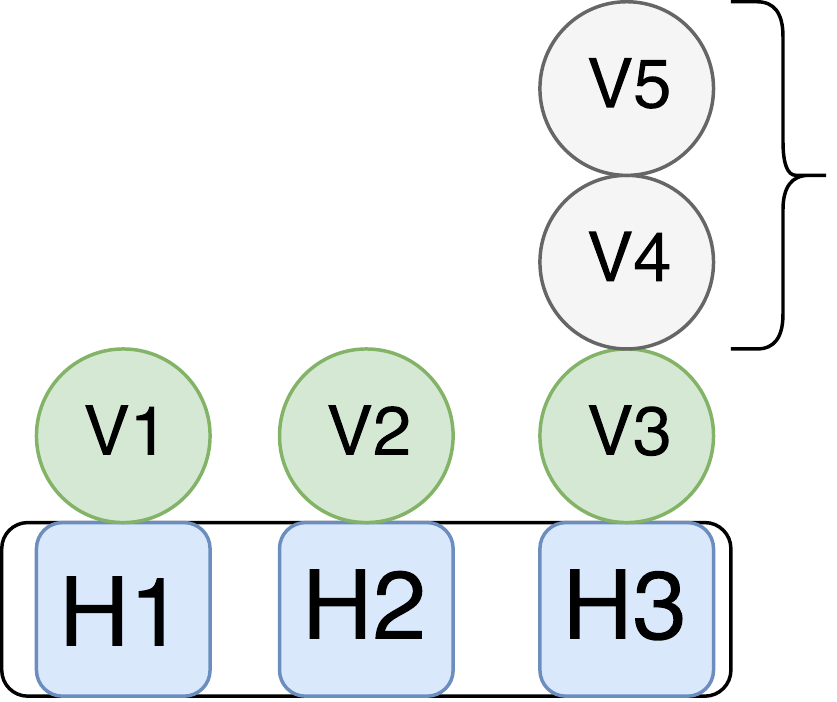}
    \caption{\textit{Reject.} Green votes have been collected by a validating peer, gray votes are missing from validating peers.}
\end{figure}

It is possible that a peer can vote for a proposal hash that was already committed. In this case, it will receive the commit message from the peer that received the vote. This process is called \textit{commit forwarding}.

%--------------------- | Proof | ---------------------%

\section{Proof of Byzantine Fault Tolerance}
\label{sec:proof}

In order to prove the BFT property of the consensus algorithm, we must prove the safety and liveness of the algorithm.
The safety property is separated in two parts. First, it should be impossible to reach consensus for different hashes for the same proposal on honest peers. Second, there should be time bounds (in terms of rounds) for honest peers in the network. The liveness property provides a guarantee for a valid client's transaction to appear in some consensus round. Proof of liveness is done by construction.

%The following assumption about the Ordering Service will follow for the algorithm steps.

\begin{assumption}
\label{assumption:rejectCase}
    The proof doesn't cover the reject case. If a reject case is detected at any peer it means that the BFT environment is broken, so this situation violates the BFT assumption.
\end{assumption}

%--------------------- | Safety | ---------------------%

\begin{lemma}\label{lemma:safety}
    Safety: the output of consensus for all honest peers will eventually be the same.
\end{lemma}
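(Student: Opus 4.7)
The plan is to decompose the statement into two sub-claims and attack them separately: (a) \emph{uniqueness} --- no two honest peers can commit different outputs for the same proposal; and (b) \emph{agreement propagation} --- every honest peer eventually reaches that common output. Claim (a) is the classical quorum-intersection argument tailored to the supermajority threshold in Section~\ref{subsec:commit}, while claim (b) is a message-chasing argument that leans on commit forwarding, the permutation-based vote propagation of Section~\ref{subsec:permutationFunction}, and Assumption~\ref{assumption:asycEnvironment}.

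For (a), I would argue by contradiction. Suppose honest peers $p_1$ and $p_2$ commit distinct block hashes $H \neq H'$ for the same proposal. By Section~\ref{subsec:commit}, each commit message carries a set of votes from a supermajority, i.e.\ strictly more than $2N/3$ of the $N$ validating peers. Standard inclusion--exclusion gives that the two vote sets share more than $N/3$ peers. In the BFT regime with at most $f < N/3$ Byzantine peers, this intersection must contain at least one honest peer $q$. But then $q$ produced signed votes for both $H$ and $H'$ in the same round, which directly contradicts Definition~\ref{definition:honestPeer} (honest peers never fork). Assumption~\ref{assumption:crypto} is what lets me treat the signatures inside each commit as unforgeable evidence of who actually voted. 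This establishes that the committed hash, if any, is unique among honest peers.

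For (b), once \emph{some} honest peer has broadcast a commit (or, using commit forwarding, once any peer that has seen a commit receives a fresh vote), I would show that every honest peer eventually installs the same commit. Let $p$ be an honest peer that has committed hash $H$; by Assumption~\ref{assumption:asycEnvironment} its broadcast commit message is eventually delivered to every recipient. For an honest peer $q$ that has not yet committed, the vote-step loop of Section~\ref{sec:collab} keeps forwarding $q$'s vote down the permutation list after each timeout, so $q$ eventually contacts a peer that has already observed the commit; commit forwarding then returns that commit to $q$, and by Assumption~\ref{assumption:crypto} $q$ can verify the signatures and adopt $H$. Combined with Assumption~\ref{assumption:rejectCase} (the reject branch is outside the BFT envelope) and the liveness-by-construction remark at the start of Section~\ref{sec:proof}, this shows that every honest peer's output converges to the unique $H$ from part (a).

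The main obstacle will be the bookkeeping in (b): one must argue carefully that the process of "an honest peer reaches a peer who holds the commit" actually terminates, rather than circulating votes indefinitely among honest peers that themselves never assembled a supermajority locally. The key observation I would formalise is that, since at least one honest peer did commit (a precondition I would first justify from the supermajority collection rule together with Assumption~\ref{assumption:os} that all honest peers receive the proposal), and since the permutation is a total order over all validators, any honest $q$ will within finitely many vote-step iterations target that peer, after which commit forwarding closes the gap in one message round. A secondary subtlety worth flagging is Assumption~\ref{assumption:agreementOnSignatures}: different honest peers may store different signature sets attached to the same $H$, but the lemma is about the \emph{output} (the committed block/hash), which is unaffected.
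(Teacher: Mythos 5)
Your proposal is correct, and its first half coincides with the paper's: part (a) is the same quorum-intersection contradiction the paper uses (two supermajorities of more than $2N/3$ out of $N$, equivalently $2f+1$ out of $3f+1$, must intersect in at least $f+1$ peers, hence in an honest peer who would have had to double-vote, contradicting Definition~\ref{definition:honestPeer} and Assumption~\ref{assumption:crypto}); your phrasing of the counting is in fact cleaner than the paper's somewhat garbled ``at least $2f+1$ peers are not honest.'' Where you genuinely diverge is the second half. The paper's part (b) does not argue eventual propagation of the commit at all: it instead proves a \emph{round-lag bound}, namely that if some peer is in round $r+1$ then every honest peer is in round $r$ or $r+1$ (any peer lagging further must be faulty, since the round-$r$ block was already committed), and it disposes of eventual delivery in one line by appeal to Assumption~\ref{assumption:asycEnvironment}. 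Your part (b) is the message-chasing argument --- vote-step retries down the permutation until an already-committed peer is reached, then commit forwarding --- which the paper actually deploys inside its \emph{liveness} proof (Lemma~\ref{lemma:liveness}, case 3) rather than here. Your decomposition arguably addresses the literal statement (``eventually the same'') more directly, and correctly isolates the termination bookkeeping as the delicate step; the paper's decomposition buys the explicit time-bound property it announces at the start of Section~\ref{sec:proof} but leaves convergence resting almost entirely on the asynchronous-delivery assumption. One caution: the precondition you flag (``at least one honest peer did commit'') is a liveness fact, not a safety one --- for this lemma it suffices to condition on a commit existing, since if none ever occurs all honest peers trivially agree on the empty output --- so you should not let your safety proof depend on establishing it.
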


\begin{proof}
    We have to proof that all honest peers in the network eventually will be in the same state.
    We can prove this by separating into two parts:
    \begin{itemize}
    \label{safety:safety_statements}
        \item[a)]~\label{safety:imposible} Impossible to achieve different states $S_1$ and $S_2$ for round $r$ for honest peers.
        \item[b)]~\label{safety:lag} If the maximum round in the network is $r + 1$, then an honest peer may only be in rounds $r$ and $r + 1$.
    \end{itemize}
    Prove a) by contradiction.
    Assume that peers $P_1$ and $P_2$ are honest and they commit different hashes $H_1$ and $H_2$ respectively.
    The commit message of $P_1$ contains at least $2f+1$ votes by commit message definition~\ref{subsec:commit}.
    The same situation applies for peer $P_2$: the commit message contains at least $2f+1$ votes for $H_2$.
    But this situation is impossible because the hash and cryptography note~\ref{assumption:crypto} and our assumptions of the BFT environment: only $f$ out of $3f+1$ peers may generate non-unique votes for a different hash, but we have at least $2f +1$ peers are not honest.

    Proving b) by the definition.
    First, we need to understand how different peers appear in different rounds.
    When peer $P_i$ multicasts a commit message, it changes its own state: $P_i: r \rightarrow r + 1$.
    And peer $P_i$ propagates a commit message to $P_j$, where $i \neq j$.
    $P_j$ will receive the commit by asynchronous assumption, but we do not know when.
    So, $P_i$ and $P_j$ will stay in different rounds: $r + 1$ and $r$.
    Now we have to show if peer $P$ that is in round less then $r$, where someone is in round $r + 1$, is faulty peer.
    If someone is in round $r + 1$, that means by definition that the block from round $r$ has already committed.
    So, peer $P$ did not vote for the block in round $r$ because the commit message with a supermajority of votes has committed.
    Thus, in round $r + 1$ it does not have a valid state for voting.
\end{proof}

%--------------------- | Liveness | ---------------------%

\begin{lemma}\label{lemma:liveness}
    Liveness: all correct transactions will be eventually committed.
\end{lemma}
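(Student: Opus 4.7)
The plan is to proceed by construction, tracking a valid transaction $t$ through each stage of the pipeline and arguing it is eventually committed. First I would invoke Assumption~\ref{assumption:os}: the OS eventually packages $t$ into some proposal $P$ and delivers $P$ to every honest peer. Next, since transaction validation is deterministic on the local state, every honest peer that reaches round $r$ processing $P$ derives the same verified proposal, the same block, and the same proposal and block hashes $H$. Because all honest peers compute identical $H$, they all evaluate the permutation function on the same input and therefore agree on a single propagation order over the validator set.

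Then I would argue that some honest peer assembles a supermajority of votes for $H$. All $2f+1$ honest peers emit a vote for $H$ and send it to the first peer in the shared permutation. If that peer is honest, it collects $2f+1$ matching votes and broadcasts a commit. If it is Byzantine and silent, the vote-step timer at each honest sender expires and the vote is forwarded to the second peer in the permutation, then the third, and so on. Since at most $f$ of the $3f+1$ peers are Byzantine, at least one of the first $f+1$ positions is held by an honest peer, who therefore receives all $2f+1$ honest votes within at most $f$ timeout intervals and issues a commit. Assumption~\ref{assumption:rejectCase} rules out the reject outcome, so the commit is the only terminating event of the vote step.

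Finally, by Assumption~\ref{assumption:asycEnvironment} the commit message is eventually delivered to every honest peer, either directly from the broadcaster or through commit forwarding as described in Section~\ref{subsec:commit}. By Lemma~\ref{lemma:safety}, the block that each honest peer applies on receiving the commit is the same block, and that block contains $t$ from the verified proposal derived from $P$. Hence $t$ is eventually committed at every honest peer, establishing liveness.

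The main obstacle will be the middle step: making precise, in the asynchronous model, that the vote step really does traverse the permutation until it reaches an honest peer. A careful treatment needs to show that when a sender targets a silent Byzantine peer, its local timer fires in finite time, the message to the next peer is delivered in finite time by Assumption~\ref{assumption:asycEnvironment}, and that the cumulative effect of at most $f$ such bad hops still produces a commit in finite time. Tying this finite Byzantine bound to a finite number of vote-step iterations is the only place where the argument is more than a direct appeal to the earlier assumptions.
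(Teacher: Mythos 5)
Your overall skeleton (OS assumption in, commit delivery out) matches the paper, but the middle of your argument rests on a premise the paper explicitly rejects. You claim that because validation is deterministic, every honest peer derives the same block and the same hashes $H$, and hence the same permutation, so that an honest peer among the first $f+1$ positions collects all $2f+1$ honest votes. But the paper states in Section~\ref{sec:collab} that ``different peers can calculate different blocks from the same proposal'' --- indeed the entire machinery of separate proposal and block hashes, reject messages, and commit forwarding exists because honest peers may be one round apart in state (Lemma~\ref{lemma:safety}, part b) and therefore validate the same proposal differently; Bob in Section~\ref{sec:example} is exactly such a peer. Once honest peers can hold different block hashes, they compute different permutations and your ``first $f+1$ positions'' counting argument no longer yields a single honest collector of $2f+1$ matching votes. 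Invoking Assumption~\ref{assumption:rejectCase} only to exclude reject as a terminating event does not repair this; you would need it (as the paper implicitly does) to justify that a supermajority for a single hash forms at all.

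The paper's own proof takes a different and less constructive route: it assumes some peer $P_{source}$ reaches a supermajority, then case-splits on whether $P_{target}$ receives the commit, and for the failure case analyzes the state of the honest peer $P_{honest}$ that receives $P_{target}$'s forwarded vote (not yet committed, about to commit, or already committed), with commit forwarding handling precisely the peer whose hash or connectivity diverged. Your attempt to actually derive the supermajority from the permutation traversal is more ambitious than the paper's (which defers that point to ``asynchronous and BFT environment assumptions''), but as written it proves liveness only for the special case where all honest peers happen to agree on the block hash; the divergent-hash case, which the paper treats via commit forwarding, is the one your argument leaves open.
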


\begin{proof}\label{proof:liveness}
    By the assumption of the Ordering Service (Assumption~\ref{assumption:os}), we guarantee that all valid clients' transactions will appear in some proposal and proposals will be received by all honest peers in the network.

    Assume peer $P_{source}$ achieves a supermajority of votes and multicasts a commit message to the network.
    It follows that two situations that may happen on multicast: a commit will be received by peer $P_{target}$, or not.

    Commit has been received by $P_{target}$:
    liveness is achieved for the target peer because of the asynchronous environment.

    Commit has not been received by $P_{target}$ from $P_{source}$.
    This situation happens when $P_{source}$ is faulty and shut down in the process of committing, or is malicious and does not send it.
    Peer $P_{target}$ is conducting the vote step while it has not received the commit message:
    vote from $P_{target}$ is received by honest peer $P_{honest}$.
    These are the possible situations:
    
    \begin{enumerate}
    \label{enumerate:liveness}
        \item[1.] $P_{honest}$ has not committed anything. In this case $P_{target}$ will repeat the vote step. Eventually some peer will achieve a supermajority of votes because of asynchronous and BFT environment assumptions.
        \item[2.] $P_{honest}$ has reached a number of votes for multicasting a commit message. This is the same case as with multicast of $P_{source}$. This case is the same as the initial assumption with $P_{source}$ peer, where $P_{source} \coloneqq P_{honest}$. Definition of honest peers also guarantee that the commit will be delivered to $P_{target}$.
        \item[3.] $P_{honest}$ has already committed. It directly propagates a commit message to $P_{target}$. However, for some reason, $P_{target}$ did not receive the commit message on multicast. This situation may happen because a supermajority subset already voted for the block and $P_{target}$ was not available due to interruptions in the network connection.
    \end{enumerate}

    Each case of YAC leads to transactions of clients appearing in a proposal and honest peers making an agreement on them.
% Does this handle the case where clients try to double spend?
\end{proof}

\section{Experiments}

This section describes experiments conducted over the Yet Another Consensus implementation in Hyperledger Iroha, an open-source, distributed ledger technology platform written in C++. The codebase used was the latest development branch, after the release of version 1.0 alpha. The system configuration used for the testing is described in the system overview section below, and details of each experiment are enclosed in the corresponding sections.

\subsection{System overview}

For each experiment, 4 virtual private servers (VPS) were used with similar configuration and different geographic locations. Servers were provisioned so that two machines were located in Tokyo, Japan, a single server was in Singapore, and another one was in Los Angeles, USA. Each Hyperleder Iroha peer was deployed in a docker environment, having a docker image created from the development branch at aspecified commit\footnote{\texttt{1adfd0}}, and a PostgreSQL 9.5 docker container for each Iroha docker container. Logs were collected from each container, using Elasticsearch, Logstash, and Kibana stack.

The VPS service provider used was Vultr~\cite{vultr}. Each docker environment at the VPS servers had the following configuration:

\begin{itemize}
\item Kernel Version: 4.4.0-109-generic
\item Operating System: Ubuntu 16.04.3 LTS
\item Architecture: x86\_64
\item CPUs: 4
\item Total Memory: 7.796GiB
\end{itemize}

Ansible Playbook was used to deploy an arbitrary number of Iroha peers, varying from 4 peers, up to 64.

\subsection{Vote step delay}

The purpose of this experiment was to deduce an optimal value for the vote step delay, which is the period of time for each peer to decide on the proposal to become the next block in the chain by voting. We tested several values in different network configurations to discover the relationship between the number of peers in the network, vote step delay parameter, and the number of proposals processed per second (system throughput). Figure~\ref{fig:vote_step} shows the median value of system throughput based on ten trials.

\begin{figure}[htbp]

\centering
\resizebox{\columnwidth}{!}{\includegraphics{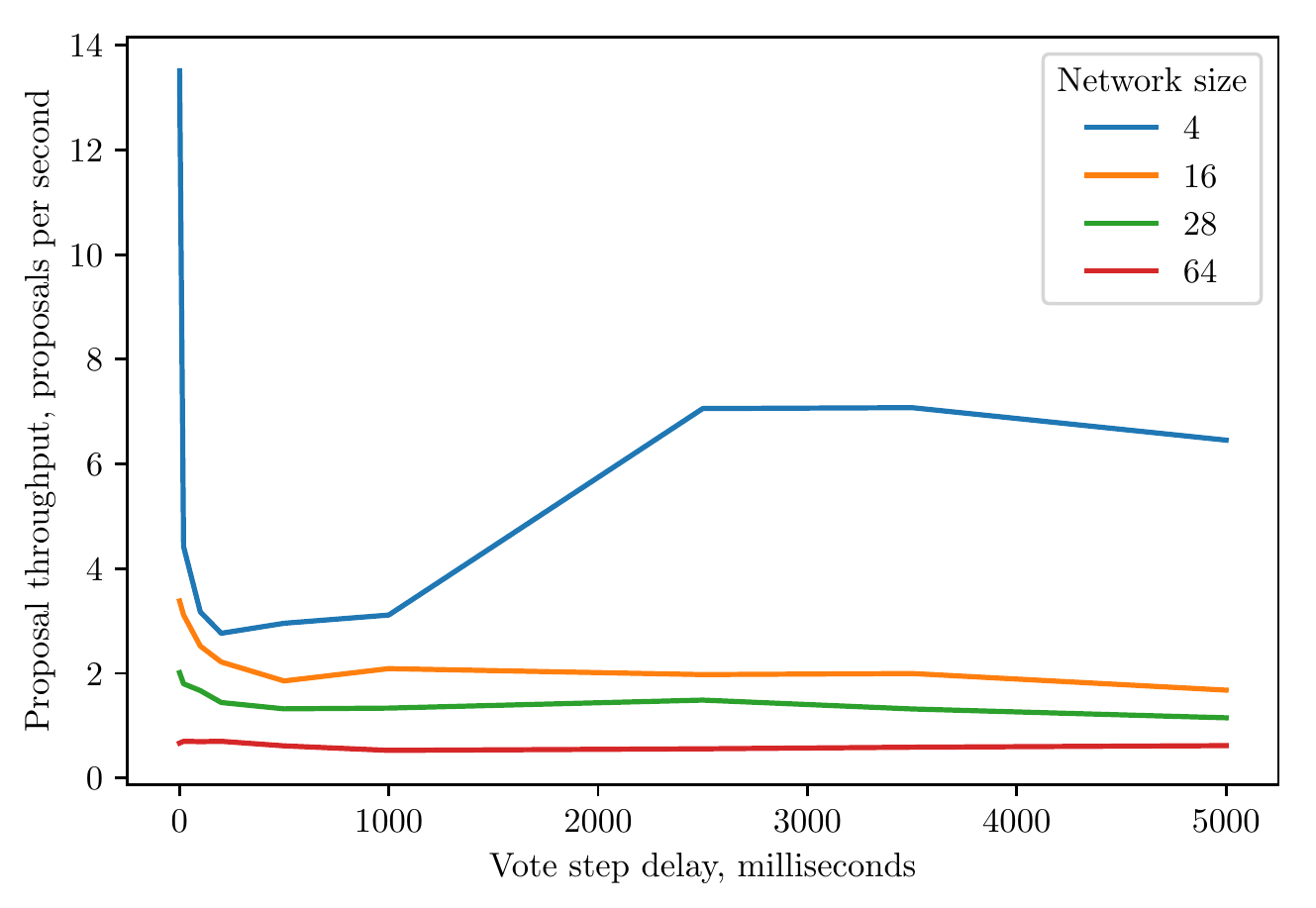}}
\caption{Proposal throughput for varying vote step delay}
\label{fig:vote_step}
\end{figure}

The experiment shows that in a small network size configuration (4 peers), small vote delay values (1 millisecond) are effective to increase the throughput. However, for bigger values of network size (from 16 to 64 peers), small vote delays impose limitations on the ability of peers to reach consensus on a proposal and thus are not improving the throughput of the system. The number of peers that exhibited unstable behavior across the 10 trials in the experiments are displayed in Table~\ref{tab:vote_step}.

\begin{table}[htbp]
\centering
\caption{Number of faulty nodes in conducted experiments}
\begin{tabular}{l|llll}
\multicolumn{1}{c}{Vote step delay,} & \multicolumn{4}{c}{Network size} \\\hline
\multicolumn{1}{c}{milliseconds}     & 4     & 16     & 28     & 64     \\\hline
1                                    & 0     & 1      & 3      & 15      \\
20                                   & 0     & 0      & 0      & 10     \\
100                                  & 0     & 0      & 0      & 4      \\
200                                  & 0     & 0      & 0      & 3      \\
500                                  & 0     & 0      & 0      & 0      \\
1000                                 & 0     & 0      & 1      & 2      \\
2500                                 & 0     & 0      & 0      & 0      \\
3500                                 & 0     & 1      & 0      & 0      \\
5000                                & 0     & 0      & 1      & 1     
\\
\end{tabular} 

\label{tab:vote_step}
\end{table}
\section{Discussion}

To propagate data between validating peers in the network takes time, thus a vote step delay that is too short can cause peers to time out and fail consensus. Our results show that a general trend is that the smallest value (1ms) of vote delay in the system shows the best results for throughput (number of proposals per second), however, the chance of consensus being reached in the network is less compared to larger vote step delay values. Ideally, each network configuration should be tested before the peers agree and use the same value for vote step delays in order to find a peak where the network of peers is more likely to reach consensus and yet the throughput does not start to degrade. In our experiments such extreme points were:

\begin{itemize}
\item 4 nodes  -- 1 millisecond
\item 16 nodes -- 1 millisecond
\item 28 nodes -- 1 millisecond
\item 64 nodes -- 20 milliseconds
\end{itemize}

Thus it is important to tailor the vote step delay	to match the number of validating peers in the network.

%\subsection{Transaction throughput}

%\subsection{Breakdown of elapsed time - collaboration, ordering, validation}

%TODO:  10/03/2018 add words about implementation and Iroha %

%TODO:  10/03/2018 words about extensions of yac: implementation OS, pop/yale, optimization of vote propagation %
%TODO:  10/03/2018 words about application in permission-less: domains, proof of stake %
%In small network like 4 peers vote step delays are causing noticeable impact on the throughput as the network size is small and consensus takes less time. In bigger networks the impact is not that noticeable and moreover causes consensus to fail if 1ms is set for example to 64 or 28 nodes, check failure stats

\section{Conclusion}

We presented YAC, a novel Byzantine fault tolerant consensus algorithm for blockchain systems. Using voting on block proposals, YAC is able to guarantee safety and liveness for transaction processing, given that not more than $f$ faulty validating peers are present out of at least $3f+1$ peers on the network. This paper provides an overall description of the algorithm execution in different cases. Also, we followed the steps of a YAC execution example where peers have different states. Empirical results using the open source implementation of YAC in Hyperledger Iroha show that the algorithm can scale to dozens of validating peers, however the delay at the vote step must be adjusted for the number of validating peers in order to reduce exhibited faults by peers. 

%Future work should explore incorporating stake-weighted voting into YAC in order to provide an opportunity to increase liveness in the face of greater than $f$ faults.

% use section* for acknowledgement
\ifCLASSOPTIONcompsoc
  % The Computer Society usually uses the plural form
%  \section*{Acknowledgments}
\else
  % regular IEEE prefers the singular form
%  \section*{Acknowledgment}
\fi

%The authors would like to thank the National Bank of Cambodia for collaborative development and research on the technology presented in this paper.

% Can use something like this to put references on a page
% by themselves when using endfloat and the captionsoff option.
\ifCLASSOPTIONcaptionsoff
  \newpage
\fi

% trigger a \newpage just before the given reference
% number - used to balance the columns on the last page
% adjust value as needed - may need to be readjusted if
% the document is modified later
%\IEEEtriggeratref{8}
% The "triggered" command can be changed if desired:
%\IEEEtriggercmd{\enlargethispage{-5in}}

% references section

% can use a bibliography generated by BibTeX as a .bbl file
% BibTeX documentation can be easily obtained at:
% http://www.ctan.org/tex-archive/biblio/bibtex/contrib/doc/
% The IEEEtran BibTeX style support page is at:
% http://www.michaelshell.org/tex/ieeetran/bibtex/
\bibliographystyle{IEEEtran}
% argument is your BibTeX string definitions and bibliography database(s)
\bibliography{yac-paper}

% Generated by IEEEtran.bst, version: 1.14 (2015/08/26)
\begin{thebibliography}{10}
\providecommand{\url}[1]{#1}
\csname url@samestyle\endcsname
\providecommand{\newblock}{\relax}
\providecommand{\bibinfo}[2]{#2}
\providecommand{\BIBentrySTDinterwordspacing}{\spaceskip=0pt\relax}
\providecommand{\BIBentryALTinterwordstretchfactor}{4}
\providecommand{\BIBentryALTinterwordspacing}{\spaceskip=\fontdimen2\font plus
\BIBentryALTinterwordstretchfactor\fontdimen3\font minus
  \fontdimen4\font\relax}
\providecommand{\BIBforeignlanguage}[2]{{%
\expandafter\ifx\csname l@#1\endcsname\relax
\typeout{** WARNING: IEEEtran.bst: No hyphenation pattern has been}%
\typeout{** loaded for the language `#1'. Using the pattern for}%
\typeout{** the default language instead.}%
\else
\language=\csname l@#1\endcsname
\fi
#2}}
\providecommand{\BIBdecl}{\relax}
\BIBdecl

\bibitem{nakamoto2008bitcoin}
\BIBentryALTinterwordspacing
S.~Nakamoto, ``Bitcoin: A peer-to-peer electronic cash system,'' 2008.
  [Online]. Available: \url{https://bitcoin.org/bitcoin.pdf}
\BIBentrySTDinterwordspacing

\bibitem{DworkN92}
C.~Dwork and M.~Naor, ``Pricing via processing or combatting junk mail,'' in
  \emph{Advances in Cryptology - {CRYPTO} '92, 12th Annual International
  Cryptology Conference, Santa Barbara, California, USA, Proceedings}, 1992,
  pp. 139--147.

\bibitem{wood2014ethereum}
\BIBentryALTinterwordspacing
G.~Wood, ``Ethereum: A secure decentralised generalised transaction ledger,''
  2015. [Online]. Available: \url{http://gavwood.com/paper.pdf}
\BIBentrySTDinterwordspacing

\bibitem{CorreiaVNV11}
\BIBentryALTinterwordspacing
M.~Correia, G.~S. Veronese, N.~F. Neves, and P.~Ver{\'{\i}}ssimo, ``Byzantine
  consensus in asynchronous message-passing systems: a survey,''
  \emph{{IJCCBS}}, vol.~2, no.~2, pp. 141--161, 2011. [Online]. Available:
  \url{https://doi.org/10.1504/IJCCBS.2011.041257}
\BIBentrySTDinterwordspacing

\bibitem{CachinV17}
\BIBentryALTinterwordspacing
C.~Cachin and M.~Vukolic, ``Blockchain consensus protocols in the wild (keynote
  talk),'' in \emph{31st International Symposium on Distributed Computing,
  {DISC}}, 2017, pp. 1:1--1:16. [Online]. Available:
  \url{https://doi.org/10.4230/LIPIcs.DISC.2017.1}
\BIBentrySTDinterwordspacing

\bibitem{PassS17}
\BIBentryALTinterwordspacing
R.~Pass and E.~Shi, ``Rethinking large-scale consensus,'' in \emph{30th {IEEE}
  Computer Security Foundations Symposium, {CSF} 2017, Santa Barbara, CA, USA,
  August 21-25, 2017}, 2017, pp. 115--129. [Online]. Available:
  \url{https://doi.org/10.1109/CSF.2017.37}
\BIBentrySTDinterwordspacing

\bibitem{lamport2005generalized}
\BIBentryALTinterwordspacing
L.~Lamport, ``Generalized consensus and paxos,'' Technical Report
  MSR-TR-2005-33, Microsoft Research, Tech. Rep., 2005. [Online]. Available:
  \url{https://www.microsoft.com/en-us/research/wp-content/uploads/2016/02/tr-2005-33.pdf}
\BIBentrySTDinterwordspacing

\bibitem{Fischer:1985ji}
\BIBentryALTinterwordspacing
M.~J. Fischer, N.~A. Lynch, and M.~Paterson, ``Impossibility of distributed
  consensus with one faulty process,'' \emph{J. {ACM}}, vol.~32, no.~2, pp.
  374--382, 1985. [Online]. Available:
  \url{http://doi.acm.org/10.1145/3149.214121}
\BIBentrySTDinterwordspacing

\bibitem{BentovPS16a}
\BIBentryALTinterwordspacing
I.~Bentov, R.~Pass, and E.~Shi, ``Snow white: Provably secure proofs of
  stake,'' \emph{{IACR} Cryptology ePrint Archive}, 2016. [Online]. Available:
  \url{http://eprint.iacr.org/2016/919}
\BIBentrySTDinterwordspacing

\bibitem{Castro:1999jd}
M.~Castro and B.~Liskov, ``Practical byzantine fault tolerance,'' in
  \emph{Proceedings of the Third {USENIX} Symposium on Operating Systems Design
  and Implementation (OSDI), New Orleans, Louisiana, USA}, 1999, pp. 173--186.

\bibitem{Miller2016}
A.~Miller, Y.~Xia, K.~Croman, E.~Shi, and D.~Song, ``The honey badger of {BFT}
  protocols,'' in \emph{Proceedings of the 2016 {ACM} {SIGSAC} Conference on
  Computer and Communications Security,}, 2016, pp. 31--42.

\bibitem{KotlaADCW07}
\BIBentryALTinterwordspacing
R.~Kotla, L.~Alvisi, M.~Dahlin, A.~Clement, and E.~L. Wong, ``Zyzzyva:
  speculative byzantine fault tolerance,'' in \emph{Proceedings of the 21st
  {ACM} Symposium on Operating Systems Principles 2007, {SOSP} 2007, Stevenson,
  Washington, USA, October 14-17, 2007}, 2007, pp. 45--58. [Online]. Available:
  \url{http://doi.acm.org/10.1145/1294261.1294267}
\BIBentrySTDinterwordspacing

\bibitem{fabric2018}
E.~Androulaki, A.~Barger, V.~Bortnikov, C.~Cachin, K.~Christidis, A.~De~Caro,
  D.~Enyeart, C.~Ferris, G.~Laventman, Y.~Manevich \emph{et~al.}, ``Hyperledger
  fabric: A distributed operating system for permissioned blockchains,'' in
  \emph{Proceedings of the Thirteenth EuroSys Conference}.\hskip 1em plus 0.5em
  minus 0.4em\relax ACM, 2018, p.~30.

\bibitem{fabric2017}
A.~Bessani, J.~Sousa, and M.~Vukolic, ``A byzantine fault-tolerant ordering
  service for the hyperledger fabric blockchain platform,'' in
  \emph{Proceedings of the 1st Workshop on Scalable and Resilient
  Infrastructures for Distributed Ledgers, SERIAL@Middleware}, 2017, pp.
  6:1--6:2.

\bibitem{Bessani:2014ei}
A.~N. Bessani, J.~Sousa, and E.~A.~P. Alchieri, ``{State Machine Replication
  for the Masses with BFT-SMART.}'' \emph{DSN}, pp. 355--362, 2014.

\bibitem{Duan:2014ce}
S.~Duan, H.~Meling, S.~Peisert, and H.~Zhang, ``{BChain - Byzantine Replication
  with High Throughput and Embedded Reconfiguration},'' \emph{OPODIS}, vol.
  8878, no. Chapter 7, pp. 91--106, 2014.

\bibitem{kwon2014tendermint}
\BIBentryALTinterwordspacing
J.~Kwon, ``Tendermint: Consensus without mining,'' 2014. [Online]. Available:
  \url{https://tendermint.com/static/docs/tendermint.pdf}
\BIBentrySTDinterwordspacing

\bibitem{Gilad:2017bp}
Y.~Gilad, R.~Hemo, S.~Micali, G.~Vlachos, and N.~Zeldovich, ``{Algorand},'' in
  \emph{the 26th Symposium}.\hskip 1em plus 0.5em minus 0.4em\relax New York,
  New York, USA: ACM Press, 2017, pp. 51--68.

\bibitem{Kiayias:2017jb}
A.~Kiayias, A.~Russell, B.~David, and R.~Oliynykov, ``{Ouroboros - A Provably
  Secure Proof-of-Stake Blockchain Protocol.}'' \emph{CRYPTO}, vol. 10401,
  no.~2, pp. 357--388, 2017.

\bibitem{Attiya1990}
\BIBentryALTinterwordspacing
H.~Attiya, A.~Bar-Noy, D.~Dolev, D.~Peleg, and R.~Reischuk, ``Renaming in an
  asynchronous environment,'' \emph{J. ACM}, vol.~37, no.~3, pp. 524--548, Jul.
  1990. [Online]. Available: \url{http://doi.acm.org/10.1145/79147.79158}
\BIBentrySTDinterwordspacing

\bibitem{vultr}
``Vultr holdings corporation website,'' \url{https://www.vultr.com/pricing/},
  2018, [Online; accessed 19-February-2018].

\end{thebibliography}
%
% <OR> manually copy in the resultant .bbl file
% set second argument of \begin to the number of references
% (used to reserve space for the reference number labels box)
%\vfill

% Can be used to pull up biographies so that the bottom of the last one
% is flush with the other column.
%\enlargethispage{-5in}

\end{document}